
\documentclass{article}
\usepackage{amsfonts}
\usepackage{amssymb}
\usepackage{amsmath}
\usepackage{graphicx}
\usepackage{fullpage}

\setcounter{MaxMatrixCols}{10}

\input{tcilatex}
\begin{document}

\title{Approaching Gaussian Relay Network Capacity in the High SNR Regime:
End-to-End Lattice Codes}
\author{Yun Xu\thanks{Department of Electrical Engineering, Yale University, New Haven, CT 06511. Email: yun.xu@yale.edu}
\and Edmund Yeh\thanks{Department of Electrical and Computer Engineering, Northeastern University, Boston, MA 02115.  Email: eyeh@ece.neu.edu}
\and Muriel M\'{e}dard\thanks{Department of Electrical Engineering and Computer Science,  MIT, Cambridge, MA 02139. Email: medard@mit.edu}}
\date{}
\maketitle

\begin{abstract}
We present a natural and low-complexity technique for achieving the capacity of the
Gaussian relay network in the high SNR regime. 
Specifically, we propose the use of end-to-end structured lattice codes with the  
amplify-and-forward strategy, where the source uses a nested lattice code to encode the
messages and the destination decodes the messages by lattice decoding. All
intermediate relays simply amplify and forward the received
signals over the network to the destination. 
We show that the end-to-end lattice-coded amplify-and-forward scheme
approaches the capacity of the layered Gaussian relay network 
in the high SNR regime.   Next, we extend our scheme to non-layered Gaussian 
relay networks under the amplify-and-forward scheme, which 
can  be viewed as a Gaussian intersymbol interference (ISI) channel.  
Compared with other schemes, our approach is significantly simpler and
requires only the end-to-end design of the lattice precoding and decoding.
It does not require any knowledge of the network topology or the individual
channel gains. 
\end{abstract}


\sloppy

\section{Introduction}

Finding the capacity of Gaussian relay networks with one source, one
destination, and a set of relays, has been a long-standing open problem in
network information theory. The relay channel was first
investigated in the seminal work of Cover and EI Gamal \cite{cov79}. More
recently, Kramer {\em et al.}~considered transmission techniques for larger
Gaussian relay networks, {\em e.g.}, the amplify-and-forward,
decode-and-forward, and compressed-and-forward schemes \cite{kra05}.
Avestimeher {\em et al.} presented a deterministic model for Gaussian relay networks
and proposed the quantize-map-and-forward strategy \cite{tse11}. It has been
shown that the quantize-map-and-forward strategy can achieve a rate within a
constant number of bits from the information-theoretic cut-set bound for
Gaussian relay networks, which is independent of the channel gains and the
operating SNR \cite{tse11}. In \cite{dig12}, Ozgur and Diggavi incorporated lattice
codes, lattice quantization, and lattice-to-lattice mapping into the
quantize-map-and-forward scheme. It was shown that the lattice-based
quantize-map-and-forward scheme can still achieve the capacity of Gaussian
relay networks within a constant gap.  While offering strong performance in terms
of achievable rates, the schemes presented in~\cite{tse11,dig12},
involve considerable operational complexity at intermediate relays.

As pointed out in \cite{med12}, in wireless communication settings, signals
simultaneously transmitted from different sources add, leading to the
receiver obtaining a superposition of these signals, scaled by the channel
gains. Since the relays are not interested in the messages sent by the
source, they do not necessarily have to decode, compress or quantize the messages.
Since the relays have already observed the
sum of the signals, in some settings a natural strategy would be to simply amplify and
forward without explicitly dealing with the noise.

A multihop amplify-and-forward scheme with random encoding and decoding
was proposed in \cite{med12}. In this strategy, the message sent by the
source is propagated over many intermediate nodes (relays) and possibly over
multiple hops. All relays exploit the interference and forward the received
signals over the network to the destination. It has been shown
in \cite{med12} that the achievable rate of the multihop amplify-and-forward
scheme approaches the capacity of the Gaussian relay network when the SNR
at the destination is sufficiently high.  

In this paper, we propose the use of end-to-end structured lattice codes with the  
amplify-and-forward strategy, where the source uses a nested lattice code to encode the
messages and the destination decodes the messages by lattice decoding. All
intermediate relays simply amplify and forward the received
signals over the network to the destination.  Relative to the random coding approach of
\cite{med12}, the use of structured lattice codes significantly reduces system complexity
by making possible computationally tractable encoding and decoding.
Furthermore, the use of end-to-end lattice codes implies that we require 
little information concerning the network topology, or individual
channel gains.  Instead, we require only the end-to-end channel response,
which can be obtained by using probing signals.

We show that the end-to-end lattice-coded amplify-and-forward scheme
approaches the capacity of the layered Gaussian relay network under the
high-SNR condition presented in \cite{med12}.   This result is facilitated
by the key observation that a Gaussian layered relay network under amplify-and-forward 
is equivalent to a point-to-point Gaussian channel.  Next, we extend our scheme
to non-layered Gaussian relay networks under the amplify-and-forward scheme, which 
can crucially be viewed as a Gaussian intersymbol interference (ISI) channel.  
Our lattice-coded amplify-and-forward scheme is simpler than the
lattice-based quantize-map-and-forward scheme proposed in \cite{dig12},
since it does not require lattice quantization and lattice-to-lattice
mapping at relay nodes. For layered networks, only end-to-end design of
the precoding and decoding with nested lattice codes is required.    Thus,
the end-to-end lattice-coded amplify-and-forward scheme is 
a natural and low-complexity technique for achieving the capacity of the
Gaussian relay network in the high SNR regime.

The nested lattice code was
originally proposed by Erez {\em et al.} in \cite{zam02}, \cite{zam04} and \cite%
{zam05-2}. In \cite{zam04}, it was shown that the nested lattice code with
lattice decoding can achieve the capacity of the additive white Gaussian noise (AWGN)
channel at any SNR. Erez {\em et al.} showed that the power-constrained AWGN
channel can be transformed into the modulo-lattice additive noise (MLAN)
channel by minimum mean-square error (MMSE) scaling along with dithering.
The capacity of the MLAN channel, achieved by uniform inputs,
becomes the capacity of the AWGN channel in the limit of large
lattice dimension.

In \cite{zam02} and \cite{zam05-2}, Erez {\em et al.} extended their techniques to
the AWGN dirty-paper channel. They obtained the achievable rate at any SNR
by incorporating MMSE scaling. It was shown that with an appropriate choice
of the lattice, the achievable rate approaches the capacity of the AWGN
dirty-paper channel as the dimension of the lattice goes to infinity. These
results provide an information-theoretic framework used in~\cite{zam02}
to study precoding for the Gaussian ISI channel.
Erez {\em et al.} showed that when combined with the techniques of
interleaving/deinterleaving and water-filling, nested lattice precoding
and decoding can achieve the capacity of the Gaussian ISI channel \cite%
{zam02}.

The remainder of this paper is organized as follows. The network
model is presented in Section II. The fundamental properties of lattices and
nested lattice codes are summarized in Section III. The main result on the
lattice-coded multihop amplify-and-forward strategy is presented in Section
IV. Section V extends the analysis to non-layered networks. Section VI
concludes the paper.


\section{Network Model}

\subsection{Layered Network}

We first focus on the layered network in which each path from the source to
the destination has the same number of hops. We denote the layer $l$ by $%
\mathcal{L}_{l}$, $l=0,1,\ldots ,L$. Assume that the source $s$ is located
at layer $\mathcal{L}_{0}$, and the destination $d$ at layer $\mathcal{L}%
_{L} $. We denote the number of relays at layer $\mathcal{L}_{l}$ by $n_{l}$%
, and thus $\sum_{l=1}^{L-1}n_{l}=N$. In a layered network, the input-output
relationship is simple due to the fact that all copies of a source message
transmitted on different paths arrive at the destination simultaneously. An
example of a layered network is shown in Figure $\ref{fig-LN}$.

\begin{figure}[h]
\centering
\includegraphics[height=2in]{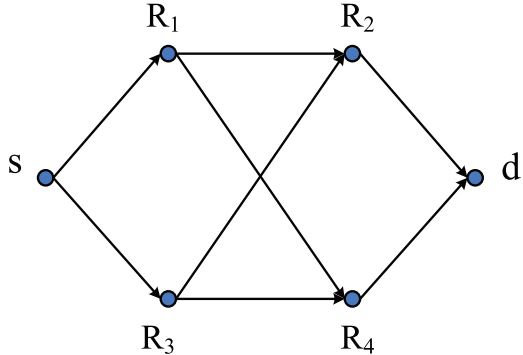}
\caption{Example of a Layered Network}
\label{fig-LN}
\end{figure}

Now consider a layered Gaussian relay network consisting of a single source $%
s$, a single destination $d$, and a set of $N$ relays. The communication
link from node $i$ to node $j$ has a nonnegative real channel gain,
represented by $h_{ij}\in \mathbb{R}_{+}$. The channel output at node $j\neq
s$ is
\begin{equation}
y_{j}=\sum_{i\in \mathcal{N}\left( j\right) }h_{ij}x_{i}+z_{j}.  \label{WGN}
\end{equation}%
where $x_{i}$ is the channel input at node $i$, $z_{j}$ is the real Gaussian
noise with zero mean and unit variance, and $\mathcal{N}\left( j\right) $
denotes the set of nodes that can transmit to node $j$ with a direct link,
i.e., $\mathcal{N}\left( j\right) =\left\{ i:h_{ij}>0\right\} $. Note that
the links are assumed to be directed so that $i\in \mathcal{N}\left(
j\right) $ does not imply $j\in \mathcal{N}\left( i\right) $. We assume that
there is an average power constraint at each node:%
\begin{equation}
E\left[ X_{i}^{2}\right] \leq P_{i}  \label{PC}.
\end{equation}%
The source $s$ wishes to send a message from a message set $\mathcal{W}%
=\left\{ 1,\ldots ,2^{nR}\right\} $ to the destination $d$ with
transmission rate $R$. The encoding function at the source is given by $%
X_{s}^{n}=f\left( W\right) ,W\in \mathcal{W}$, and a decoding function at
the destination $d$ is given by $\hat{W}=g\left( Y_{d}^{n}\right) $. A $%
\left( R,n\right) $ code consists of a message set $\mathcal{W}$, an
encoding function at the source, and a decoding function at the destination.
The average error probability of the $\left( R,n\right) $ code is given by $%
P_{e}=\Pr [\hat{W}\neq W]$. A rate $R$ is said to be achievable if for any $%
\epsilon >0$, there exists a $\left( R,n\right) $ code such that $P_{e}\leq
\epsilon $ for a sufficiently large $n$.


\subsection{High SNR Regime}

As in \cite{med12}, we are interested in the scenario in which all relays
forward the data with large enough power to guarantee that the total
propagated noise at the destination by multihop amplify-and-forward is low
enough. Assume that each node $i$ transmits with the average power $P_{i}$
given by $\left( \ref{PC}\right) $. The power received at relay $j\in
\mathcal{L}_{l},l=1,\ldots ,L-1$, is then determined by%
\begin{equation*}
P_{R,j}=\left( \sum_{i\in \mathcal{L}_{l-1}}h_{ij}\sqrt{P_{i}}\right) ^{2},\
\ \ j\in \mathcal{L}_{l}
\end{equation*}
and the power received at the destination $d$ is given by
\begin{equation}
P_{d}=\left( \sum_{i\in \mathcal{L}_{L-1}}h_{id}\sqrt{P_{i}}\right) ^{2}.
\label{Pd}
\end{equation}


As in \cite{med12}, we consider a high SNR regime where for some
small $\delta >0$, the transmit powers of the relays satisfy
\begin{equation}
\min_{j\in \mathcal{L}_{l}}P_{R,j}\geq \frac{1}{\delta },\text{ \ \ }%
l=1,\ldots ,L-1.  \label{HSNR}
\end{equation}

We then assume that $P_{d}$ remains a constant as $\delta \rightarrow 0$, so
that the Multiple-Access Channel (MAC)
at the destination is a bottleneck for the data transmission.\footnote{
Note that for network capacity, the worst case occurs when the bottleneck
is at the MAC at the destination.  In this case, the noise is propagated over
more hops than in any other case.}  The MAC cut-set
bound is given by
\begin{equation}
C_{MAC}=\frac{1}{2}\log \left( 1+P_{d}\right).  \label{CMAC}
\end{equation}


\section{Lattices and Nested Lattice Codes}

In this section, we briefly review some basic properties of lattices and
nested lattice codes. A more extensive discussion can be found in references 
such as \cite{zam02} and \cite{zam04}. A lattice $\Lambda $ is a discrete subgroup
of the Euclidean space $\mathbb{R}^{n}$. If $\lambda _{1}$ and $\lambda _{2}$
are two elements of a lattice $\Lambda $, then the sum $\lambda
_{1}+\lambda _{2}$ and the additive inverse $-\lambda _{1}$ are also
elements of $\Lambda $. Any lattice can be written in terms of its
generating matrix $G$:%
\begin{equation*}
\Lambda =\left\{ \lambda =G\boldsymbol{x}:\boldsymbol{x}\in \mathbb{Z}%
^{n}\right\}.
\end{equation*}%

We can then define the nearest neighbor quantizer associated with $\Lambda $
by%
\begin{equation*}
Q_{\Lambda }\left( \boldsymbol{x}\right) =\func{argmin}_{\lambda \in \Lambda
}\left\Vert \boldsymbol{x}-\lambda \right\Vert.
\end{equation*}%
The Voronoi region of a lattice point $\lambda \in \Lambda $ is the set of
all points that quantize to it. The fundamental Voronoi region $\mathcal{V}$
is the set of all points that quantize to the origin, i.e.,
\begin{equation*}
\mathcal{V}=\{\boldsymbol{x}:Q_{\Lambda }\left( \boldsymbol{x}\right) =%
\boldsymbol{0}\}.
\end{equation*}%
Define the modulo-$\Lambda $ operation corresponding to $\mathcal{V}$ as
\begin{equation*}
\boldsymbol{x}\text{ mod }\Lambda =\boldsymbol{x}-Q_{\Lambda }\left(
\boldsymbol{x}\right).
\end{equation*}

The second moment of a lattice $\Lambda $ is defined by%
\begin{equation*}
\sigma _{\Lambda }^{2}=\frac{1}{nVol\left( \mathcal{V}\right) }\int_{%
\mathcal{V}}\left\Vert \boldsymbol{x}\right\Vert ^{2}d\boldsymbol{x},
\end{equation*}%
and the normalized second moment of a lattice $\Lambda $ is defined by
\begin{equation*}
G\left( \Lambda \right) =\frac{1}{n\left[ Vol\left( \mathcal{V}\right) %
\right] ^{1+2/n}}\int_{\mathcal{V}}\left\Vert \boldsymbol{x}\right\Vert ^{2}d%
\boldsymbol{x},
\end{equation*}%
where $Vol\left( \mathcal{V}\right) $ is the volume of $\mathcal{V}$.

Two lattices $\Lambda _{1}$ and $\Lambda _{2}$ are said to be nested if $%
\Lambda _{1}\subseteq \Lambda _{2}$, where $\Lambda _{1}$ is called the
coarse lattice and $\Lambda _{2}$ the fine lattice. Denote by $\mathcal{V}%
_{1}$ and $\mathcal{V}_{2}$ the fundamental Voronoi regions of $\Lambda _{1}$
and $\Lambda _{2}$, respectively. The coding rate is defined by
\begin{equation*}
R=\frac{1}{n}\log \left[ \frac{Vol\left( \mathcal{V}_{1}\right) }{Vol\left(
\mathcal{V}_{2}\right) }\right].
\end{equation*}%
The points in the set
\begin{equation*}
\mathcal{C}=\Lambda _{2}\cap \mathcal{V}_{1}
\end{equation*}%
are called the coset leaders of $\Lambda _{1}$ relative to $\Lambda _{2}$.
For each $\boldsymbol{c}\in \mathcal{C}$, the shifted coarse lattice $%
\Lambda _{1,c}=\boldsymbol{c}+\Lambda _{1}$ is called a coset of $\Lambda
_{1}$ relative to $\Lambda _{2}$.

The use of high-dimensional nested lattice code is justified by the
existence of asymptotically good lattices. We consider two types of goodness
as introduced in \cite{zam02} and \cite{zam04}.

(1) \textit{Good for AWGN Channel Coding:} For any $\epsilon >0$ and
sufficiently large $n$, there exists an $n$-dimensional lattice $\Lambda $
with the volume of the fundamental Voronoi region $Vol\left( \mathcal{V}%
\right) <2^{n\left[ h\left( Z\right) +\epsilon \right] }$, where $Z$ is
Gaussian noise with variance $\sigma _{Z}^{2}$, and $h\left( Z\right) =%
\frac{1}{2}\log \left( 2\pi e\sigma _{Z}^{2}\right) $ is the differential
entropy of $Z$, such that%
\begin{equation*}
P_{e}=\Pr \left[ Z\notin \mathcal{V}\right] <\epsilon.
\end{equation*}

\textit{(2) Good for Source Coding: }For any $\epsilon >0$ and sufficiently
large $n$, there exists an $n$-dimensional lattice $\Lambda $ whose
normalized second moment $G\left( \Lambda \right) $ satisfies%
\begin{equation*}
\log \left( 2\pi eG\left( \Lambda \right) \right) <\epsilon.
\end{equation*}


\section{Lattice-Coded Multihop Amplify-and-Forward}

In~\cite{med12}, it is shown that the simple multihop amplify-and-forward scheme with
random coding can approximately achieve the unicast capacity of a Gaussian
relay network in the high SNR regime. In this paper, we propose the use of
structured nested lattice codes in conjunction with the multihop amplify-and-forward scheme. Choose a
pair of nested lattices $\left( \Lambda _{1},\Lambda _{2}\right) $, $\Lambda
_{1}\subseteq \Lambda _{2}$, with the coding rate
\begin{equation*}
R=\frac{1}{n}\log \left[ \frac{Vol\left( \mathcal{V}_{1}\right) }{Vol\left(
\mathcal{V}_{2}\right) }\right] \geq R_{LAF},
\end{equation*}%
where $R_{LAF}$ is the rate achieved by the lattice-coded
amplify-and-forward scheme defined by $\left( \ref{AFR}\right) $. We choose
the coarse lattice $\Lambda _{1}$ to be good for source coding, with the
second moment $\sigma _{\Lambda _{1}}^{2}=P_{s}$, where $P_{s}$ is the
average power constraint of the source node $s$. We choose the fine lattice $%
\Lambda _{2}$ to be good for AWGN channel coding. Let $Q_{\Lambda _{2}}$
denote the nearest neighbor quantizer of the fine lattice $\Lambda _{2}$. We
apply the scheme proposed in \cite{zam02}, \cite{zam04} and \cite{zam05-2}
to the layered Gaussian relay network as follows:

(1) \textit{Source:} the source $s$ maps the message $W\in \mathcal{W}$
uniformly at random to a coset leader of $\Lambda _{1}$ relative to $\Lambda _{2}$:%
\begin{equation*}
\boldsymbol{t}\in \mathcal{C}=\Lambda _{2}\cap \mathcal{V}_{1}.
\end{equation*}%
Then a random dither vector $\boldsymbol{u}$ is generated uniformly over $%
\mathcal{V}_{1}$, i.e., $\boldsymbol{u}\sim Unif\left( \mathcal{V}%
_{1}\right) $. Given the message $W$, the source encoder sends%
\begin{equation}
\boldsymbol{x}_{s}=[\boldsymbol{t}+\boldsymbol{u}]\text{ mod }\Lambda _{1}.
\label{SEC}
\end{equation}

(2) \textit{Relay: }each relay $i\in \mathcal{L}_{l},l=1,\ldots ,L-1$,
performs the multihop amplify-and-forward scheme%
\begin{equation}
\boldsymbol{x}_{l,i}=\beta _{i}\boldsymbol{y}_{l,i},\text{ \ \ \ }i\in
\mathcal{L}_{l}  \label{AF}
\end{equation}%
where the amplification gain is chosen as%
\begin{equation}
\beta _{i}=\frac{\sqrt{P_{i}}}{\sqrt{\left( 1+\delta \right) P_{R,i}}},\text{
\ \ }i\in \mathcal{L}_{l}.  \label{AmGain}
\end{equation}%
In~\cite{med12}, it is shown that the power constraint $\left( \ref{PC}\right) $ at
each node $i$ is satisfied by choosing the amplification gain in $\left( \ref%
{AmGain}\right) $.  Also shown in \cite{med12} are the following two
results on the propagated noise.

\begin{lemma}
\cite{med12} At any node $i\in \mathcal{L}_{l}$, the noise propagated from
all nodes in layer $\mathcal{L}_{l-k},k=1,\ldots ,l-1$, via the multihop
amplify-and-forward scheme in the high SNR regime has the power%
\[
P_{z,i}^{l-k}\leq \frac{\delta P_{R,i}}{\left( 1+\delta \right) ^{k}}.
\]
\end{lemma}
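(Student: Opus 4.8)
The plan is to track the noise propagation hop-by-hop through the layered network and bound the accumulated power recursively. Consider a node $i \in \mathcal{L}_{l}$ and focus on the contribution to $\boldsymbol{y}_{l,i}$ coming from the noise vectors $\{\boldsymbol{z}_j : j \in \mathcal{L}_{l-k}\}$ injected $k$ layers upstream. Because the relays act linearly via \eqref{AF} with gains \eqref{AmGain}, each such noise term reaches node $i$ having been multiplied by a product of channel gains and amplification gains along every source-to-$i$ path that passes through the corresponding node in layer $\mathcal{L}_{l-k}$, and these contributions from all upstream nodes add up coherently (this is exactly the feature of layered networks noted after \eqref{WGN}). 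So the first step is to write $P_{z,i}^{l-k}$ explicitly as a squared sum over the relevant paths, isolating the factor $\beta_i^2$ and the channel gains feeding into node $i$ from layer $\mathcal{L}_{l-1}$.

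First I would set up the induction on $k$. For the base case $k=1$: the noise from layer $\mathcal{L}_{l-1}$ enters node $i$ only through the one hop from $\mathcal{L}_{l-1}$ to $i$, so its power is $\beta_i^2 \big(\sum_{j\in\mathcal{L}_{l-1}} h_{ji}\sqrt{1}\,\cdot(\text{unit noise std})\big)$-type expression; more precisely it is $\beta_i^2 \sum_{j \in \mathcal{L}_{l-1}} h_{ji}^2$ if one is careful that independent unit-variance noises add in power, but combined with the amplification structure one gets a bound of the form $\beta_i^2 P_{R,i}\cdot(\text{something}\le\delta)$. Substituting $\beta_i^2 = P_i/((1+\delta)P_{R,i})$ from \eqref{AmGain}, the $P_{R,i}$ cancels and what remains must be shown to be at most $\delta P_{R,i}/(1+\delta)$. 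The inductive step then says: the layer-$\mathcal{L}_{l-k}$ noise arriving at node $i$ is the layer-$\mathcal{L}_{l-k}$ noise arriving at the nodes $j \in \mathcal{L}_{l-1}$ — which by hypothesis has power at most $\delta P_{R,j}/(1+\delta)^{k-1}$ — and is then scaled by one further hop, i.e.\ by $\beta_i^2 (\sum_j h_{ji}\sqrt{\cdot})^2$ relative to the received powers $P_{R,j}$. The key algebraic identity to exploit is that $P_{R,i} = (\sum_{j \in \mathcal{L}_{l-1}} h_{ji}\sqrt{P_j})^2$, which lets one relate the weighted combination of the $P_{R,j}$ to $P_{R,i}$ after multiplication by $\beta_i^2$, picking up exactly one more factor of $1/(1+\delta)$ and thereby turning $(1+\delta)^{-(k-1)}$ into $(1+\delta)^{-k}$.

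The step I expect to be the main obstacle is making the path-counting and the Cauchy–Schwarz/power-addition bookkeeping rigorous: one has to be careful about whether the upstream noise contributions at distinct nodes $j \in \mathcal{L}_{l-1}$ are independent (so powers add) or perfectly correlated (so amplitudes add), and the bound has to be uniform so that the worst case — amplitudes adding — is what the recursion actually uses. Concretely, the delicate inequality is to show $\beta_i^2 \big(\sum_{j\in\mathcal{L}_{l-1}} h_{ji}\sqrt{P_{z,j}^{l-k}}\big)^2 \le \frac{1}{1+\delta}\max_j \frac{P_{z,j}^{l-k}}{P_{R,j}}\cdot P_{R,i}$ using $\beta_i^2 = P_i/((1+\delta)P_{R,i})$ together with $h_{ji}\sqrt{P_j} \le \sqrt{P_{R,i}}$ summing to $\sqrt{P_{R,i}}$; invoking the inductive bound $P_{z,j}^{l-k}/P_{R,j} \le \delta/(1+\delta)^{k-1}$ then closes the loop. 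Once that inequality is in place the result follows by induction on $k$ from $1$ to $l-1$, and I would simply cite \cite{med12} for the original derivation while noting that the statement is used here verbatim.
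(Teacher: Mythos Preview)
The paper does not supply its own proof of this lemma; it is quoted from \cite{med12} and used as a black box, so there is nothing in the present paper to compare your argument against beyond the bare statement.

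Your inductive plan is the right one and, with one bookkeeping fix, it goes through cleanly. The slip is that you attach the amplification factor $\beta_i$ to the \emph{receiving} node $i\in\mathcal{L}_l$, whereas the noise present in the received signal $y_i$ has been scaled by the gains $\beta_j$ of the \emph{transmitting} relays $j\in\mathcal{L}_{l-1}$; $\beta_i$ only enters when node $i$ forwards, which is one hop later. With that correction the inductive step reads
\[
P_{z,i}^{l-k}\;\le\;\Bigl(\sum_{j\in\mathcal{L}_{l-1}} h_{ji}\,\beta_j\,\sqrt{P_{z,j}^{l-k}}\Bigr)^{2}
=\frac{1}{1+\delta}\Bigl(\sum_{j} h_{ji}\sqrt{P_j}\,\sqrt{\tfrac{P_{z,j}^{l-k}}{P_{R,j}}}\Bigr)^{2}
\le \frac{1}{1+\delta}\,\Bigl(\max_{j}\frac{P_{z,j}^{l-k}}{P_{R,j}}\Bigr)\, P_{R,i},
\]
using $\beta_j^2=P_j/\bigl((1+\delta)P_{R,j}\bigr)$ and $\sum_j h_{ji}\sqrt{P_j}=\sqrt{P_{R,i}}$; the inductive hypothesis $P_{z,j}^{l-k}/P_{R,j}\le\delta/(1+\delta)^{k-1}$ then yields the claimed $\delta P_{R,i}/(1+\delta)^{k}$. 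The base case $k=1$ is the same computation with the independent unit-variance noises $z_j$ and the high-SNR assumption $P_{R,j}\ge 1/\delta$ giving $\beta_j^2\le \delta P_j/(1+\delta)$. Your observation that amplitudes (not powers) must be added in the worst case is exactly the point that makes the recursion uniform.
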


\begin{corollary} \label{TPN}
\cite{med12} The total noise propagated to the destination $d \in \mathcal{L}_{L}$ has the power
\begin{equation}
P_{z,d}=\sum_{k=1}^{L-1}P_{z,d}^{L-k}=\delta P_{d}\sum_{k=1}^{L-1}\frac{1}{%
\left( 1+\delta \right) ^{k}}\leq L\delta P_{d}.  \label{TPNP}
\end{equation}
\end{corollary}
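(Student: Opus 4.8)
The plan is to obtain the corollary directly from the preceding Lemma by specializing it to the destination node and then summing the per-layer noise bounds. First I would observe that the destination $d$ sits in the last layer $\mathcal{L}_{L}$, so the Lemma applies with $l=L$ and $i=d$; in this case the quantity $P_{R,d}$ appearing in the Lemma is exactly the received power $P_{d}$ given by $(\ref{Pd})$. The Lemma then states that, for each $k=1,\ldots,L-1$, the component of the disturbance received at $d$ that originated as the additive Gaussian noise injected at the nodes of layer $\mathcal{L}_{L-k}$ and was subsequently amplified and forwarded toward $d$ has power
\begin{equation*}
P_{z,d}^{L-k}\leq \frac{\delta P_{d}}{(1+\delta)^{k}}.
\end{equation*}

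Next I would argue that these per-layer contributions add up in power. Since amplify-and-forward is a linear operation and the noises $z_{j}$ injected at nodes of distinct layers are mutually independent, the total propagated noise at $d$ is a sum of mutually independent, zero-mean terms, so its variance is the sum of the individual variances; hence $P_{z,d}=\sum_{k=1}^{L-1}P_{z,d}^{L-k}$. (If the exact form of the propagated-noise power from the construction in the high SNR regime is used rather than the bound, the Lemma's inequality is met with equality at the destination, which accounts for the middle equality in the statement.) Substituting term by term gives
\begin{equation*}
P_{z,d}\leq \delta P_{d}\sum_{k=1}^{L-1}\frac{1}{(1+\delta)^{k}}.
\end{equation*}

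Finally, since $\delta >0$ each summand is strictly less than $1$, so the geometric sum of $L-1$ terms is bounded above by $L-1<L$, which yields $P_{z,d}\leq L\delta P_{d}$ and completes the argument. I do not anticipate any genuine obstacle here: the only step that warrants care is the additivity of the noise powers across layers, which rests on the independence of the injected noises and the linearity of the amplify-and-forward map; once that is noted, what remains is the specialization of the Lemma to node $d$ and an elementary bound on a geometric series.
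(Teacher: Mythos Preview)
Your argument is correct: specializing the Lemma to $i=d\in\mathcal{L}_{L}$ (so that $P_{R,d}=P_{d}$), summing the independent per-layer contributions, and bounding the $(L-1)$-term geometric sum by $L$ is exactly the intended derivation. The paper itself does not supply a proof of this corollary---it is quoted from \cite{med12}---so there is no alternative argument to compare against, but your route is the natural one implied by the preceding Lemma.
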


(3) \textit{Destination:} the destination computes
\begin{equation*}
\boldsymbol{\hat{y}}_{d}=Q_{\Lambda _{2}}\left( \alpha \boldsymbol{y}_{d}+%
\boldsymbol{u}\right) \text{ mod }\Lambda _{1},
\end{equation*}%
where
\begin{equation*}
\alpha =\frac{\gamma }{1+\gamma },\text{ \ \ }\gamma =\frac{P_{d}}{\left(
1+\delta \right) ^{L-1}P_{z,d}}.
\end{equation*}%

The following theorem is the main result for the lattice-coded
amplify-and-forward scheme.

\begin{theorem}
In a layered relay network $\left( \ref{WGN}\right) $ in the high SNR regime
defined by $\left( \ref{HSNR}\right) $, the lattice-coded multihop
amplify-and-forward achieves the rate%
\begin{equation}
R_{LAF}\geq \frac{1}{2}\log \left[ 1+\frac{1}{\left( 1+\delta \right) ^{L-1}}%
\frac{P_{d}}{1+L\delta P_{d}}\right].   \label{AFR}
\end{equation}
\end{theorem}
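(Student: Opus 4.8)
The plan is to show that under the amplify-and-forward rule $(\ref{AF})$ the entire layered network behaves, as seen by the source lattice codeword, like a single point-to-point Gaussian channel, and then to apply the nested-lattice achievability of Erez \emph{et al.} in \cite{zam04} to that channel. First I would unroll the recursion $\boldsymbol{x}_{l,i}=\beta_i\boldsymbol{y}_{l,i}$ from the destination back through the $L$ layers. Since there is a single source and the network is layered, every copy of $\boldsymbol{x}_s$ reaches a given relay after exactly the same number of hops, so at each node the ``signal part'' of the received vector is a deterministic scalar multiple of $\boldsymbol{x}_s$ and there is no intersymbol interference; collecting terms at the destination gives
\begin{equation*}
\boldsymbol{y}_d = h_{eq}\,\boldsymbol{x}_s + \boldsymbol{n}_{eq},
\end{equation*}
where $h_{eq}$ is a deterministic end-to-end gain (a product of the $h_{ij}$ and $\beta_i$) and $\boldsymbol{n}_{eq}$ is the destination noise $\boldsymbol{z}_d$ together with every relay noise $\boldsymbol{z}_j$ scaled by the corresponding gain product. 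Since the $\boldsymbol{z}_j,\boldsymbol{z}_d$ are independent, isotropic, and Gaussian, and the layered structure forces uniform path delays, $\boldsymbol{n}_{eq}$ has i.i.d.\ Gaussian components; and it is independent of $\boldsymbol{x}_s$ because the dither $\boldsymbol{u}\sim Unif(\mathcal{V}_1)$ makes $\boldsymbol{x}_s$ uniform on $\mathcal{V}_1$ and independent of both $\boldsymbol{t}$ and the noise.

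Next I would identify the two parameters of this channel. Tracking signal power layer by layer: at $\mathcal{L}_1$ the received signal power equals the nominal $P_{R,j}$ (the source transmits at power $P_s=\sigma_{\Lambda_1}^2$ and $\mathcal{L}_0=\{s\}$), and each amplification by $\beta_i=\sqrt{P_i}/\sqrt{(1+\delta)P_{R,i}}$ in $(\ref{AmGain})$ multiplies the signal power by $1/(1+\delta)$; after the $L-1$ relay layers the signal power at $d$ is $h_{eq}^2P_s=P_d/(1+\delta)^{L-1}$, with $P_d$ as in $(\ref{Pd})$. For the noise, the destination noise contributes variance $1$ and, by the high-SNR bounds of \cite{med12} (the Lemma and Corollary \ref{TPN} above), the total propagated relay noise has power at most $L\delta P_d$, so the per-component variance of $\boldsymbol{n}_{eq}$ is at most $1+L\delta P_d$. (The high-SNR condition $(\ref{HSNR})$ is exactly what makes the choice of $\beta_i$ compatible with the power constraint $(\ref{PC})$ and what makes those noise bounds hold.) Hence the effective SNR of this point-to-point channel is at least $\dfrac{1}{(1+\delta)^{L-1}}\dfrac{P_d}{1+L\delta P_d}$.

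It then remains to run the nested-lattice scheme with MMSE scaling and dithering on this channel. With $\Lambda_1$ good for source coding and $\Lambda_2$ good for AWGN channel coding (so that the coding rate $\frac{1}{n}\log[Vol(\mathcal{V}_1)/Vol(\mathcal{V}_2)]$ can be taken arbitrarily close to $\tfrac12\log(1+\mathrm{SNR})$), the operation $\hat{\boldsymbol{y}}_d=Q_{\Lambda_2}(\alpha\boldsymbol{y}_d+\boldsymbol{u})\bmod\Lambda_1$ with the MMSE coefficient $\alpha=\gamma/(1+\gamma)$ matched to the effective SNR $\gamma$ turns the channel into a modulo-lattice additive-noise channel whose capacity --- achieved by the dither-induced uniform input and approached as $n\to\infty$ --- is $\tfrac12\log(1+\mathrm{SNR})$, exactly the transformation analyzed in \cite{zam04}. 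Substituting the SNR lower bound from the previous step gives
\begin{equation*}
R_{LAF}\ \ge\ \frac12\log\!\left[\,1+\frac{1}{(1+\delta)^{L-1}}\,\frac{P_d}{1+L\delta P_d}\,\right],
\end{equation*}
as claimed.

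I expect the main obstacle to be the first step: making the reduction to a point-to-point channel rigorous. One must verify that at every node the received vector splits cleanly into a ``scalar times $\boldsymbol{x}_s$'' term and a noise term, that the layered hypothesis genuinely precludes any intersymbol interference (this is precisely where ``layered'' is used --- the analogue fails in Section V, which is why non-layered networks become ISI channels), and that $\boldsymbol{n}_{eq}$ is white and independent of the lattice input so that the argument of \cite{zam04} applies verbatim. The accompanying power bookkeeping --- obtaining the factor $1/(1+\delta)^{L-1}$ and correctly matching the source lattice's second moment $P_s$ to the attenuated end-to-end gain $h_{eq}$ --- is the other point needing care; everything past it is a direct appeal to the nested-lattice AWGN result of \cite{zam04} together with Corollary \ref{TPN}.
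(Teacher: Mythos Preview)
Your proposal is correct and follows essentially the same approach as the paper: reduce the layered network under amplify-and-forward to a point-to-point AWGN channel with end-to-end gain $\hat h_d=\sqrt{P_d/[P_s(1+\delta)^{L-1}]}$ and effective noise power at most $1+L\delta P_d$ (via Corollary~\ref{TPN}), then invoke the nested-lattice achievability of \cite{zam04}. If anything, you are more explicit than the paper about why $\boldsymbol{n}_{eq}$ is white Gaussian and independent of the input; the paper simply cites \cite{med12} for the equivalent-channel form $(\ref{GC})$ and proceeds directly.
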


\begin{proof}
As in \cite{med12}, if the amplification gain at each relay is chosen as $%
\left( \ref{AmGain}\right) $, the received signal at the destination can be
written as%
\begin{equation}
y_{d}=\hat{h}_{d}x_{s}+\hat{z}_{d}+z_{d},  \label{GC}
\end{equation}%
where $\hat{z}_{d}$ the total propagated noise, $z_{d}$ is the noise at
the destination, and
\begin{equation}
\hat{h}_{d}=\frac{\sqrt{P_{d}}}{\sqrt{P_{s}\left( 1+\delta \right) ^{L-1}}}.
\label{hd}
\end{equation}%
By $\left( \ref{GC}\right) $ and $\left( \ref{hd}\right) $, the received
signal power at the destination is%
\begin{equation*}
\hat{P}_{d}=\frac{P_{d}}{\left( 1+\delta \right) ^{L-1}}.
\end{equation*}%
By Corollary $\ref{TPN}$, the power of the total propagated noise $\hat{z}%
_{d}$ is%
\begin{equation*}
P_{z,d}=\delta P_{d}\sum_{k=1}^{L-1}\frac{1}{\left( 1+\delta \right) ^{k}}%
\leq L\delta P_{d}.
\end{equation*}%
Therefore the SNR at the destination satisfies%
\begin{equation}
SNR\geq \frac{1}{\left( 1+\delta \right) ^{L-1}}\frac{P_{d}}{1+L\delta P_{d}}.
\label{GSNR}
\end{equation}%
In other words, the received signal $y_{d}$ can be viewed as the output of
the AWGN channel characterized by $\left( \ref{GC}\right) $ with the SNR
given by $\left( \ref{GSNR}\right) $. The capacity, or equivalently the
achievable rate via amplify-and-forward, is then given by
\begin{eqnarray*}
R_{LAF} &=&\frac{1}{2}\log \left( 1+SNR\right) \\
&\geq &\frac{1}{2}\log \left[ 1+\frac{1}{\left( 1+\delta \right) ^{L-1}}%
\frac{P_{d}}{1+L\delta P_{d}}\right].
\end{eqnarray*}%
It is shown in \cite{zam02} and \cite{zam04} that if we choose the coarse lattice to
be good for source coding and the fine lattice $\Lambda _{2}$ to be good for
AWGN channel coding, nested lattice codes can achieve the capacity of the
AWGN Gaussian channel when the dimension, or equivalently, the length of the
codewords $n$, tends to infinity. Hence, the above lattice-coded
amplify-and-forward scheme can achieve the rate $R_{LAF}$.
\end{proof}

Note that as $\delta \rightarrow 0$, the rate achieved by the lattice-coded
multihop amplify-and-forward scheme in $\left( \ref{AFR}\right) $
approaches the MAC cut-set bound $\left( \ref{CMAC}\right) $, and the
unicast capacity of the Gaussian relay network.


\section{Extensions}

\subsection{Non-layered Networks}

In layered networks, each path from the source to the
destination has the same number of hops, so that all copies of the source
message transmitted on different paths arrive at the destination with the
same delay.  In non-layered networks, however, copies of the source message
may arrive at the destination with different delays through different paths.
Assume that the number of hops (length) in the longest path is $L\geq 1$. We
can then classify all paths from the source to the destination according to
the path length%
\begin{equation*}
P_{l}=\left\{ \text{paths of length }l\right\}
\end{equation*}%
Assume that the number of paths of length $l$ is $K_{l},l=1,\ldots ,L$. As
shown in \cite{med12}, the received signal at the destination $d$ at time $t$
is then given by%
\begin{eqnarray*}
y_{d}\left( t\right) &=&h_{0}x_{s}\left( t\right) +\sum_{j\in
P_{1}}h_{j,1}x_{s}\left( t-1\right) +\ldots \\
&&+\sum_{j\in P_{L}}h_{j,L}x_{s}\left( t-L\right) +z_{e}\left( t\right),
\label{eq:ISI}
\end{eqnarray*}%
where $h_{0}$ is the channel gain on the direct link from the source to the
destination, and $h_{j,l}$ is the equivalent channel gain of path $j$
in the set $P_{l}$. Note that $h_{j,l}$ depends on the network topology, and
contains the accumulated channel gains and amplification gains on the
source-destination path $j$.  Finally, $z_{e}\left( t\right) $ denotes the total
propagated noise at the destination.

From~\eqref{eq:ISI}, we see that under the amplify-and-forward scheme,
the non-layered Gaussian relay network is equivalent to a Gaussian
ISI channel:
\begin{equation}
y_{d}\left( t\right) =\sum_{l=0}^{L}h_{l}x_{s}\left( t-l\right) +z_{e}\left(
t\right),  \label{GISI}
\end{equation}%
where $x_{s}\left( t-l\right) ,l=0,\ldots ,L$, are the inputs to the
Gaussian ISI channel, $h_{l}=$ $\sum_{j\in P_{l}}h_{j,l}$ represents the ISI
coefficient, and $y_{d}\left( t\right) $ stands for the received samples.
The additive Gaussian noise is denoted by $z_{e}\left( t\right) $.

We now focus on the feedforward MMSE decision feedback (MMSE-DFE) equalizing
filter for the Gaussian ISI channel \cite{for95}, \cite{for95-2}. The output
of the MMSE-DFE feedforward filter can be written as
\begin{equation}
r\left( t\right) =x_{s}\left( t\right) +s\left( t\right) +n\left( t\right),
\label{DP}
\end{equation}%
where $s\left( t\right) =\sum_{l=1}^{L}\hat{h}_{l}x_{s}\left( t-l\right)$
is the post-cursor intersymbol interference, and $\hat{h}_{l}$ and $n\left(
t\right) $ represent the ISI coefficients and the sampled noise at the
output of the MMSE-DFE feedforward filter, respectively.

The SNR associated with the MMSE-DFE filter is
defined by \cite{for95}%
\begin{equation*}
SNR_{MMSE-DFE}=\frac{E\left[ X_{s}^{2}\left( t\right) \right] }{E\left[
N^{2}\left( t\right) \right] },
\end{equation*}%
and the capacity of the Gaussian ISI channel is given by \cite{for95-2}%
\begin{equation}
C_{ISI}=\frac{1}{2}\log \left( 1+SNR_{MMSE-DFE}\right).  \label{CISI}
\end{equation}

If the encoder knows the entire post-cursor intersymbol interference vector
before transmission, as mentioned in \cite{zam02}, the input-output
relationship given by $\left( \ref{DP}\right) $ can be viewed as the
Gaussian dirty-paper channel whose capacity is given by $\left( \ref{CISI}%
\right) $. Based on that observation, \cite{zam02} proposed a coding
strategy for the Gaussian ISI channel, in which the MMSE-DFE feedback
equalizing filter is replaced by nested lattice precoding, as described in
the dirty paper case. In the interleaver, the messages are encoded row by
row and are transmitted column by column. When a message which comprises the
$j$th row of the interleaver is to be encoded, the post-cursor interfering
symbols belong to the codewords for messages which have been already
encoded, similar to the dirty paper scenario \cite{zam02}.  In~\cite{zam02},
it is shown that nested lattice precoding with interleaving/deinterleaving
and waterfilling can achieve the capacity of the Gaussian ISI channel given by $%
\left( \ref{CISI}\right) $. 
Waterfilling is required because the sampled noise $n(t)$ in \eqref{DP} 
may not be white Gaussian. 
If we incorporate nested lattice precoding into the multihop 
amplify-and-forward scheme, the achievable rate can thus be obtained as the 
capacity of the corresponding Gaussian ISI channel. 
The lattice-coded multihop amplify-and-forward scheme is shown in 
Figure $\ref{fig-ISI}$.

Unlike the case for layered networks, in order to implement nested lattice precoding
with interleaving for non-layered Gaussian wireless relay networks, it is
necessary to know the channel gains as manifested in the ``ISI coefficients."
In the absence of such knowledge, techniques such as blind equalization may
be needed to preserve the performance of our scheme.

If the encoder does not know the entire post-cursor ISI vector before transmission, 
we have to use the original MMSE-DFE technique proposed in \cite{for95} with nested 
lattice encoding and decoding. In that case, however, the decoded messages must be 
fed back. Thus, any decoding error will affect the performance of the MMSE-DFE 
feedback filter, and hence the subsequent decoding process. This may lead to the
decoding error propagating over multiple symbols.

\begin{figure}[h]
\centering
\includegraphics[height=3.5in]{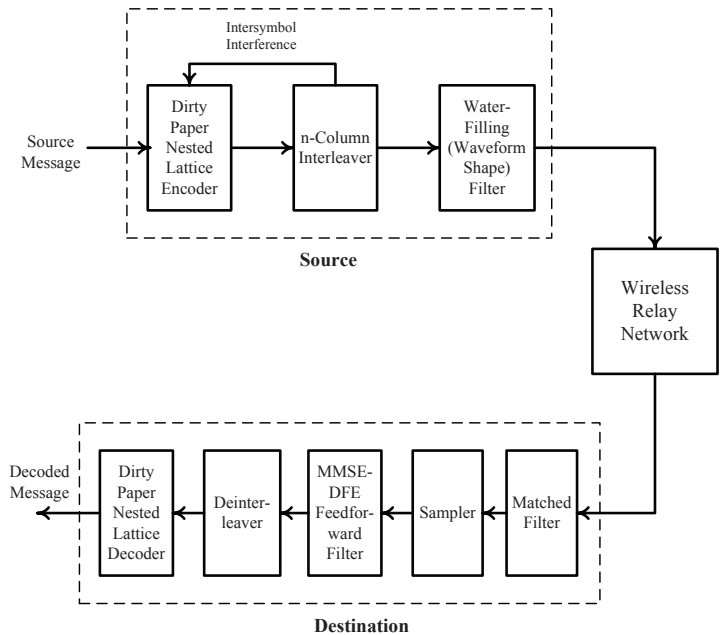}
\caption{Lattice-Coded Multihop Amplify-and-Forward Scheme}
\label{fig-ISI}
\end{figure}

%
%
%


\section{Conclusion}

In this paper, we considered an end-to-end lattice-coded multihop amplify-and-forward
strategy for Gaussian wireless relay networks in the high SNR regime. When the
power received at all relays are large enough, our strategy performs well
for both layered and non-layered Gaussian relay networks. In the worst case,
the bottleneck of the multihop transmission is at the multi-access channel (MAC)
at the destination. We showed that our strategy approaches the MAC cut-set
bound as the received powers at the relays increase.
The lattice-coded multihop amplify-and-forward scheme is simpler than
the decode-and-forward scheme and the quantize-map-and-forward scheme. Our
scheme requires only end-to-end design: lattice precoding at the
source and decoding at the destination. It does not require any knowledge of
the network topology or the individual channel gains.

\bibliographystyle{IEEEtran}
\bibliography{IEEE-lattice}

\end{document}